\documentclass[letter, 10pt, conference]{ieeeconf}

\IEEEoverridecommandlockouts
\usepackage{amsmath,amssymb,amsfonts}
\usepackage{bm}
\usepackage{graphicx}
\usepackage{subcaption}

\usepackage{xcolor}
\usepackage[hidelinks]{hyperref}
\usepackage[ruled,vlined,linesnumbered]{algorithm2e}
\usepackage{cite}
\let\labelindent\relax

\usepackage{enumitem}

\setenumerate{label=\roman*), ref=(\roman*)}

\usepackage{amsthm}
\theoremstyle{plain}
\newtheorem{theorem}{Theorem}
\newtheorem{proposition}{Proposition}
\newtheorem{corollary}{Corollary}
\newtheorem{assumption}{Assumption}
\theoremstyle{remark}

\newcommand{\R}{\mathbb{R}}
\newcommand{\one}{\mathbf{1}}
\newcommand{\Vcal}{\mathcal{V}}
\newcommand{\Ecal}{\mathcal{E}}
\newcommand{\Gcal}{\mathcal{G}}
\newcommand{\Ncal}{\mathcal{N}}
\newcommand{\Hcal}{\mathcal{H}}
\newcommand{\Xcal}{\mathcal{X}}
\newcommand{\sat}{\mathrm{sat}}
\let\oldfrac\frac 
\renewcommand{\frac}[2]{%
  \mathchoice
    {\oldfrac{#1}{#2}}%
    {{#1}/{#2}}%
    {{#1}/{#2}}%
    {{#1}/{#2}}%
}
\title{\LARGE \bf Recursive Neighbor Pruning: Decentralized Community Detection\\ with Certified Behavior from Saturated Opinion Dynamics}

\author{Anthony Couthures$^{1}$, Irinel-Constantin Mor\u{a}rescu$^{2}$, Vineeth Satheeskumar Varma$^{2}$, \\ Samson Lasaulce$^{2}$ and Antoine Girard$^{3}$%
\thanks{$^{1}$UCLouvain, ICTEAM Institute, Louvain-la-Neuve, 1348, Belgium.}%
\thanks{$^{2}$Universit\'e de Lorraine, CNRS, CRAN, F-54000 Nancy, France.}%
\thanks{$^{3}$Universit\'e Paris-Saclay, CNRS, CentraleSup\'elec, Laboratoire des signaux et syst\`emes, F-91190, Gif-sur-Yvette, France.}%
\thanks{An extended version of this work, which also covers two variants of RNP, has been submitted to Physical Review E. The document has been checked with Claude Opus 5.5 in order to improve to overall presentation}%
}

\begin{document}
\bstctlcite{IEEEexample:BSTcontrol}

\maketitle
\thispagestyle{empty}
\pagestyle{empty}

\begin{abstract}
Detecting communities, i.e., groups of densely connected nodes, usually requires a central entity that knows the whole network. We study Recursive Neighbor Pruning (RNP), a fully decentralized alternative inspired by social learning. Each agent holds a private opinion but only shows its neighbors a coarse public action, namely its opinion amplified and clipped to a fixed range. Agents repeatedly move their opinion toward the actions of their neighbors and, once opinions settle, drop the links with neighbors whose action has the opposite sign. Repeating this procedure on the remaining network splits it into communities, without any agent knowing the network or the number of communities. We use tools from nonlinear control to characterize this behavior. We show that splitting requires the amplification to exceed a threshold set by how strongly the network is bottlenecked. We also show that a group in which every member has enough of its links inside the group can no longer be split once its members agree, and that a single run separates such groups exactly when started from suitable opinions. Finally, we bound the number of runs and messages. On two standard families of synthetic networks, RNP is as accurate as the best centralized method as long as each node has fewer than about forty percent of its links outside its community, and it fails close to the value predicted by our analysis. It is more accurate than two decentralized competitors below this level, but exchanges many more messages.
\end{abstract}

\section{Introduction}\label{sec:intro}

Many networks are organized into communities, i.e., groups of nodes with many links among themselves and few links to the rest of the network. Communities shape how such networks function and how information spreads in them~\cite{fortunatoCommunityDetectionGraphs2010,liComprehensiveReviewCommunity2024}. Standard detection methods, based on spectral clustering~\cite{pothenPartitioningSparseMatrices1990,spielmanSpectralPartitioningWorks2007} or on modularity maximization~\cite{blondelFastUnfoldingCommunities2008,traagLouvainLeidenGuaranteeing2019}, require a single entity that knows every link. In many networked systems this is impossible, because the network is too large or because nodes cannot or will not share their connections. This calls for algorithms in which communities emerge from local interactions between neighbors, and whose behavior can be guaranteed.

Social learning offers a template. People do not know the whole social network, nor what their acquaintances privately think, yet they infer which group they belong to by observing whether their neighbors' visible choices agree with theirs. Opinion dynamics models~\cite{frenchFormalTheorySocial1956,degrootReachingConsensus1974,friedkinSocialInfluenceOpinions1990} formalize this process, but in their basic form, where each agent averages its neighbors' opinions, all agents of a connected network end up agreeing~\cite{bergerNecessarySufficientCondition1981}, which erases any trace of communities. Persistent disagreement requires an additional mechanism, such as ignoring neighbors whose opinions are too different~\cite{deffuantMixingBeliefsInteracting2000,hegselmannOpinionDynamicsBounded2002}, antagonistic links~\cite{altafiniConsensusProblemsNetworks2013}, or coarse communication~\cite{martinsContinuousOpinionsDiscrete2008,grayMultiagentDecisionMakingDynamics2018,bizyaevaNonlinearOpinionDynamics2023}. In the last case, each agent communicates an amplified and clipped version of its opinion, like a hesitant person who votes firmly. When the amplification is strong enough, the neutral agreement becomes unstable and groups of agents polarize toward opposite extremes~\cite{fontanMultiequilibriaAnalysisClass2018,leonardNonlinearFeedbackDynamics2021a,couthuresGlobalSynchronizationMultiAgent2025}.

Recursive Neighbor Pruning (RNP) turns this polarization into a community detection algorithm. All agents start from random opinions and run the dynamics until opinions settle; each agent then drops its links to neighbors whose public action has the opposite sign; the procedure is repeated on the remaining network until no more links are dropped. The communities are the groups of nodes that remain connected. Agents never share their private opinion exactly, nor any community label, and they need no knowledge of the network or of the number of communities.

\emph{Related work.} Three lines of work are closest to ours. Mor\u{a}rescu and Girard~\cite{morarescuOpinionDynamicsDecaying2011} detect communities as groups of agents that agree quickly, by cutting links whose disagreement exceeds a shrinking tolerance while the dynamics evolve. RNP instead keeps the network fixed during each run and exploits the several stable polarized states created by clipping. Xing et al.~\cite{xingLearningCommunitiesEquilibria2024} show that the steady states of nonlinear opinion dynamics reveal two planted communities, which are then read by a central observer. RNP is fully decentralized and handles an unknown number of communities. Finally, existing decentralized methods either let nodes broadcast community labels (label propagation~\cite{raghavanLinearTimeAlgorithm2007,xieSLPAUncoveringOverlapping2011}) or compute spectral quantities by repeated local averaging~\cite{kempeDecentralizedAlgorithmSpectral2008,linPowerIterationClustering2010}; we compare RNP with both. This paper builds on~\cite{couthuresConsensusRobustClustering2025c} but we propose the following specific novelties.

\emph{Contributions.} (i) For the step-by-step dynamics actually run by the agents, we show that no link can be cut unless the amplification exceeds a threshold determined by the most pronounced bottleneck of the network. We also show that splitting starts along the direction used by spectral clustering. (ii) We give an explicit condition on the fraction of a node's links that leave its group. Under this condition, a group that has agreed can never be split. We also prove that a single run separates such groups exactly from an explicit set of starting opinions. (iii) We bound the number of runs and messages of RNP. (iv) We compare RNP with centralized and decentralized methods on two families of synthetic benchmark networks. The numerical simulations validate our theoretical results.

\emph{Notation.} $\R$ is the set of real numbers. Vectors and matrices are in bold, and $\bm{x}^\top$ is the transpose of $\bm{x}$. $\one$ is the vector with all entries equal to $1$, $\bm{0}$ is the zero vector and $\bm{I}$ the identity matrix; their dimension is the number $N$ of agents defined in Section~\ref{sec:model}. $\mathrm{diag}(\bm{x})$ is the diagonal matrix with diagonal $\bm{x}$, $\mathrm{span}(\bm{v})=\{\alpha\bm{v}\mid\alpha\in\R\}$, $|\mathcal{S}|$ is the number of elements of a set $\mathcal{S}$, $\mathrm{Fix}(f)$ is the set of fixed points of a map $f$, $\lceil\cdot\rceil$ rounds up to an integer, $\mathcal{U}(a,b)$ is the uniform distribution on $[a,b]$, and $\sat(y)=\min(1,\max(-1,y))$.

\section{Problem Setting and Model}\label{sec:model}

\subsection{Graph and decentralized constraints}

We consider $N$ agents interacting over an undirected, connected, simple graph $\Gcal=(\Vcal,\Ecal)$, $\Vcal=\{1,\dots,N\}$, with adjacency matrix $\bm{A}$, neighborhoods $\Ncal_i=\{j\in\Vcal\mid(i,j)\in\Ecal\}$, degrees $d_i=|\Ncal_i|\ge 1$, and $\bm{D}=\mathrm{diag}(d_1,\dots,d_N)$. The random-walk matrix $\bm{P}=\bm{D}^{-1}\bm{A}$ is self-adjoint for the inner product $\langle\bm{x},\bm{y}\rangle_{\bm{D}}=\bm{x}^\top\bm{D}\bm{y}$, with induced norm $\|\bm{x}\|_{\bm{D}}=\langle\bm{x},\bm{x}\rangle_{\bm{D}}^{1/2}$; vectors orthogonal for this inner product are called $\bm{D}$-orthogonal. Hence the eigenvalues of $\bm{P}$ are real, $-1\le\lambda_1\le\dots\le\lambda_{N-1}<\lambda_N=1$, and the associated eigenvectors $\bm{v}_1,\dots,\bm{v}_N$ can be chosen $\bm{D}$-orthogonal, with $\bm{v}_N=\one$ since $\bm{P}\one=\one$. The eigenvector $\bm{v}_{N-1}$ is the Fiedler vector of the random-walk Laplacian $\bm{I}-\bm{P}$. We write $\rho(\Gcal) = \max(\lambda_{N-1},-\lambda_1)$ and $K^*(\Gcal)=1/\lambda_{N-1}$ with convention $K^*(\Gcal)=+\infty$ if $\lambda_{N-1}\le 0 $ . 

We seek a partition of $\Vcal$ into communities under three constraints: \emph{locality} (agent $i$ only interacts with $\Ncal_i$ and knows no global quantity), \emph{limited disclosure} (agents exchange a single scalar per step, and never labels, degrees or neighbor lists), and \emph{uniformity} (identical rules, no leader or central aggregation).

\subsection{Nonlinear social learning dynamics}

At each time step $t=0,1,2,\dots$, each agent $i$ holds a private opinion $x_i(t)\in[-1,1]$ and broadcasts the public action $s(x_i(t))$, where $s:[-1,1]\to[-1,1]$ is the action function specified in Assumption~\ref{ass:s} below. It moves toward the average action of its neighbors:
\begin{equation}\label{eq:dyn_i}
x_i(t+1)=x_i(t)+h\Big(\frac{1}{d_i}\sum_{j\in\Ncal_i}s(x_j(t))-x_i(t)\Big),
\end{equation}
with step size $h\in(0,1]$. With $\bm{x}=(x_1,\dots,x_N)^\top$ and $\bm{s}(\bm{x})=(s(x_1),\dots,s(x_N))^\top$, this reads \[\bm{x}(t+1)=(1-h)\bm{x}(t)+h\bm{P}\bm{s}(\bm{x}(t)).\] This is the Euler discretization of $\dot{\bm{x}}=\bm{P}\bm{s}(\bm{x})-\bm{x}$ studied in~\cite{couthuresGlobalSynchronizationMultiAgent2025,couthuresConsensusRobustClustering2025c}. As $x_i(t+1)$ is a convex combination of $x_i(t)$ and values of $s$, $\Xcal=[-1,1]^N$ is forward invariant.

\begin{assumption}\label{ass:s}
   The function $s:[-1,1]\to[-1,1]$ satisfies $0 \leq (s(a)-s(b))/(a-b) \leq K$ for all $a \neq b$, is differentiable at $0$ with $s'(0)=K>1$, and $\mathrm{Fix}(s)=\{-1,0,1\}$.
\end{assumption}

The natural example, used in our experiments, is $s_K(x)= \min(\max(Kx,-1),1) :=\sat(Kx)$, which is odd. Monotonicity reflects cooperative behavior and makes~\eqref{eq:dyn_i} a monotone system~\cite{hirschChapter4Monotone2006}. Note that a neighbor observing $s_K(x_i)$ can reconstruct $x_i$ exactly when $|x_i|<1/K$, but only its sign once $|x_i|\ge 1/K$. This is not a formal privacy guarantee: the sign is always revealed, and repeated observations may reveal more. Assuming $K>1$ makes consensus at $\bm{0}$ repelling and $\pm\one$ attracting, creating a symmetry breaking.

\begin{algorithm}[ht]
    \DontPrintSemicolon
    \SetKwInOut{Input}{Input}\SetKwInOut{Output}{Output}
    \Input{$\Gcal=(\Vcal,\Ecal)$; $s$, $h$, $T_{\max}$, $N_{\mathrm{retry}}$}
    \Output{Pruned graph, communities as connected components}
    $n\gets 0$\;
    \While{$n<N_{\mathrm{retry}}$}{
      \lForEach{$i\in\Vcal$}{$x_i\gets\mathcal{U}(-1,1)$}
      \For{$t=1$ \KwTo $T_{\max}$}{
        \lForEach{$i\in\Vcal$}{$x_i\gets x_i+h\big(\oldfrac{1}{d_i}\sum_{j\in\Ncal_i}s(x_j)-x_i\big)$}
      }
      $\Ecal_{\mathrm{cut}}\gets\{(i,j)\in\Ecal\mid s(x_i)s(x_j)<0\}$\;
      \lIf{$\Ecal_{\mathrm{cut}}\neq\emptyset$}{$\Ecal\gets\Ecal\setminus\Ecal_{\mathrm{cut}}$; update $\Ncal_i,d_i$; $n\gets0$}
      \lElse{$n\gets n+1$}
    }
    \caption{Recursive Neighbor Pruning (RNP)}
    \label{algo:RNP}
\end{algorithm}\vspace{-5mm}

\section{The Recursive Neighbor Pruning Algorithm}\label{sec:rnp}

A \emph{round} of RNP (Algorithm~\ref{algo:RNP}) consists of: 
\begin{enumerate}[wide, labelwidth=!, labelindent=0pt,nosep,label=\arabic*.]
    \item independent draws $x_i(0)\sim\mathcal{U}(-1,1)$
    \item $T_{\max}$ steps of dynamics~\eqref{eq:dyn_i} on the current graph
    \item removal of every edge $(i,j)$ with $s(x_i)s(x_j)<0$, a local action since agent $i$ knows its action and observes that of $j$.
\end{enumerate}
Then, if an edge is cut, a new round starts on the new graph; otherwise the round is repeated, up to $N_{\mathrm{retry}}$ consecutive unsuccessful rounds. The communities are the connected components of the final graph.

In Fig.~\ref{fig:rnp} we illustrate the implementation of this algorithm using a Stochastic Block Model (SBM) with four blocks of $40$ nodes, in which two nodes are linked with probability $p_{\mathrm{in}}=0.3$ if they belong to the same block and $p_{\mathrm{out}}\le 0.05$ otherwise.

The initialization is random rather than spectral. A Fiedler initialization would require global information and restrict rounds to bisections, whereas random initial conditions sample the multistable attractor landscape and often cut the graph into more than two components in one round. Since the dynamics on distinct components are decoupled, the recursion refines the partition component by component, without any agent knowing the number of communities.

\section{Theoretical Guarantees}\label{sec:theory}

For $\mathcal{W}\subseteq\Vcal$ and $i\in\mathcal{W}$, let $d_i^E=|\Ncal_i\setminus\mathcal{W}|$ be the number of neighbors of $i$ outside $\mathcal{W}$ and $\xi_i^{\mathcal{W}}=d_i^E/d_i$ its \emph{external degree ratio}.

\begin{figure}[t]
\centering
\begin{subfigure}[t]{0.32\columnwidth}
    \includegraphics[width=\linewidth]{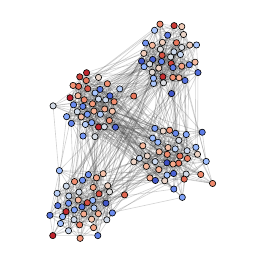}\label{fig:rnp1}
    \caption{Initialization.}
\end{subfigure}
\begin{subfigure}[t]{0.32\columnwidth}
    \includegraphics[width=\linewidth]{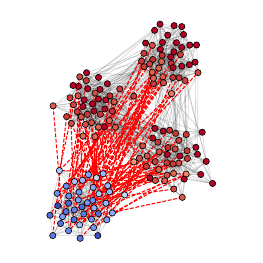}\label{fig:rnp1}
    \caption{Edge removal.}
\end{subfigure}
\begin{subfigure}[t]{0.32\columnwidth}
    \includegraphics[width=\linewidth]{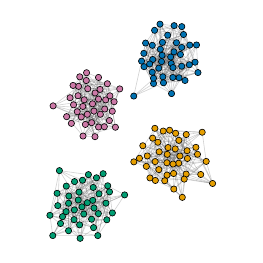}\label{fig:rnp1}
    \caption{Final partition.}
\end{subfigure}
\caption{RNP on an SBM with $p_{\mathrm{in}}=0.3$ and $p_{\mathrm{out}}\le 0.05$ ; $s=s_K$, $K=5$, $h=0.1$. (b) At steady state, edges joining agents with opposite actions (red) are cut. (c) Recursion yields four communities.}
\label{fig:rnp}\vspace{-3mm}
\end{figure}

\subsection{Spectral threshold for the discrete-time dynamics}

Let $\one^{\perp_{\bm{D}}}=\{\bm{x}\in\R^N\mid\langle\one,\bm{x}\rangle_{\bm{D}}=0\}$ be the subspace of vectors $D$-orthogonal to the consensus direction $\one$. This subspace is spanned by $\bm{v}_1,\dots,\bm{v}_{N-1}$ and invariant under $\bm{P}$. For $\bm{x}\in\R^N$, let $\bar x=\one^\top\bm{D}\bm{x}/W$ be its degree-weighted average, where $W=\sum_i d_i$, and $\Pi\bm{x}=\bm{x}-\bar x\one$ its $D$-orthogonal projection onto $\one^{\perp_{\bm{D}}}$, i.e., its disagreement component. One has $\Pi\bm{P}=\bm{P}\Pi$ and, for all $\bm{x},\bm{y}\in\R^N$,
\begin{equation}\label{eq:polar}
    \langle\Pi\bm{x},\Pi\bm{y}\rangle_{\bm{D}}=\frac{1}{2W}\sum_{i,j=1}^N d_id_j(x_i-x_j)(y_i-y_j).
\end{equation}

\begin{proposition}\label{prop:threshold}
    Let Assumption~\ref{ass:s} hold and $h\in(0,1]$.
    \begin{enumerate}
        \item If $K\lambda_{N-1}<1$, every equilibrium of~\eqref{eq:dyn_i} is of the form $\alpha\one$ with $\alpha\in\{-1,0,1\}$.\label{prop:threshold:1}
        \item If $K \rho(\Gcal) < 1$, then for all $t\ge0$, $\|\Pi\bm{x}(t)\|_{\bm{D}}\le\big(1-h(1-K\rho(\Gcal))\big)^t\|\Pi\bm{x}(0)\|_{\bm{D}}$.\label{prop:threshold:2}
        \item If $K\lambda_{N-1}<1$, the consensus set $\mathrm{span}(\one)$ is globally attractive for $\dot{\bm{x}}=\bm{P}\bm{s}(\bm{x})-\bm{x}$.\label{prop:threshold:3}
    \end{enumerate}
\end{proposition}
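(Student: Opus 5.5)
Throughout, write $\bm{\sigma}=\bm{s}(\bm{x})$. The common tool is the polarization identity~\eqref{eq:polar} combined with the sector bound of Assumption~\ref{ass:s}. Since $0\le (s(x_i)-s(x_j))/(x_i-x_j)\le K$, each pair satisfies $0\le (x_i-x_j)(\sigma_i-\sigma_j)$ and $(\sigma_i-\sigma_j)^2\le K(x_i-x_j)(\sigma_i-\sigma_j)$. Summing with weights $d_id_j/(2W)$ gives
\[
\|\Pi\bm{\sigma}\|_{\bm{D}}^2\le K\langle\Pi\bm{x},\Pi\bm{\sigma}\rangle_{\bm{D}}\le K\|\Pi\bm{x}\|_{\bm{D}}\|\Pi\bm{\sigma}\|_{\bm{D}}.
\]
Hence $\|\Pi\bm{\sigma}\|_{\bm{D}}\le K\|\Pi\bm{x}\|_{\bm{D}}$, and also $\langle\Pi\bm{x},\Pi\bm{\sigma}\rangle_{\bm{D}}\ge \|\Pi\bm{\sigma}\|_{\bm{D}}^2/K$. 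On $\one^{\perp_{\bm{D}}}$, the operator $\bm{P}$ is $\bm{D}$-self-adjoint with spectrum $\{\lambda_1,\dots,\lambda_{N-1}\}$. Therefore $\langle\bm{y},\bm{P}\bm{y}\rangle_{\bm{D}}\le\lambda_{N-1}\|\bm{y}\|_{\bm{D}}^2$ and $\|\bm{P}\bm{y}\|_{\bm{D}}\le\rho(\Gcal)\|\bm{y}\|_{\bm{D}}$ for $\bm{y}\in\one^{\perp_{\bm{D}}}$.

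\textbf{Part (ii).} Apply $\Pi$ to the update and use $\Pi\bm{P}=\bm{P}\Pi$:
\[
\Pi\bm{x}(t+1)=(1-h)\Pi\bm{x}(t)+h\bm{P}\Pi\bm{\sigma}(t).
\]
The triangle inequality, $1-h\ge0$, and the two bounds above give
\[
\|\Pi\bm{x}(t+1)\|_{\bm{D}}\le(1-h+hK\rho(\Gcal))\|\Pi\bm{x}(t)\|_{\bm{D}}.
\]
Induction then yields the claim.

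\textbf{Part (i).} Equilibria of~\eqref{eq:dyn_i} are independent of $h$ and satisfy $\bm{x}=\bm{P}\bm{\sigma}$. Projecting gives $\Pi\bm{x}=\bm{P}\Pi\bm{\sigma}$. Take the $\bm{D}$-inner product with $\Pi\bm{\sigma}$:
\[
\frac{1}{K}\|\Pi\bm{\sigma}\|_{\bm{D}}^2\le\langle\Pi\bm{x},\Pi\bm{\sigma}\rangle_{\bm{D}}=\langle\Pi\bm{\sigma},\bm{P}\Pi\bm{\sigma}\rangle_{\bm{D}}\le\lambda_{N-1}\|\Pi\bm{\sigma}\|_{\bm{D}}^2.
\]
Since $K\lambda_{N-1}<1$, this forces $\Pi\bm{\sigma}=\bm{0}$. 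The sign of $\lambda_{N-1}$ does not matter here, because the left side is nonnegative. Then $\Pi\bm{x}=\bm{0}$, so $\bm{x}=\alpha\one$. Moreover $\bm{P}\one=\one$ gives $\alpha=s(\alpha)$, so $\alpha\in\mathrm{Fix}(s)=\{-1,0,1\}$.

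\textbf{Part (iii).} This is the expected obstacle, because $\lambda_{N-1}$ only controls the upper side of the spectrum, so a plain norm contraction is unavailable. I would try the Lyapunov-type function
\[
V(\bm{x})=\sum_i d_i\Big(\int_0^{x_i}s\Big)-\tfrac12\bm{\sigma}^\top\bm{A}\bm{\sigma},
\]
which has $\dot V=-\sum_i d_i s'(x_i)(x_i-(\bm{P}\bm{\sigma})_i)^2\le0$ almost everywhere. By LaSalle on the compact invariant set $\Xcal$, trajectories approach the set where $s'(x_i)(x_i-(\bm{P}\bm{\sigma})_i)=0$. The difficulty is to show that this set, intersected with the invariant part, consists of equilibria; equivalently, on saturated coordinates $x_i$ relaxes exponentially to $(\bm{P}\bm{\sigma})_i$, which is constant there. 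One then invokes (i) to conclude convergence to $\mathrm{span}(\one)$.

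If the Lyapunov route becomes messy, I would fall back on the disagreement functional $\tfrac12\|\Pi\bm{x}\|_{\bm{D}}^2$. Its derivative is
\[
-\|\Pi\bm{x}\|_{\bm{D}}^2+\langle\Pi\bm{x},\bm{P}\Pi\bm{\sigma}\rangle_{\bm{D}}.
\]
Because the system is cooperative and irreducible (monotone, in the sense of~\cite{hirschChapter4Monotone2006}), almost every trajectory converges to an equilibrium. Combining this with (i) and a continuity/order argument using the order bounds $\pm\one$ gives global attractivity of the consensus set.
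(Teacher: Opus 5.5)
Parts (i) and (ii) are correct and follow the paper's proof essentially line by line. The gap is in (iii). The paper does not prove this item; it cites~\cite{couthuresConsensusRobustClustering2025c}. Your self-contained attempt fails as written, for two reasons.

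\textbf{The Lyapunov candidate is wrong.} Your $V$ is not a Lyapunov function. Along trajectories, $\dot V=\sum_i d_i\dot x_i\,[s(x_i)-s'(x_i)(\bm{P}\bm{\sigma})_i]$, which agrees with your formula only where $s(x_i)=s'(x_i)x_i$. Take $s=s_K$ and any state with all $x_i\in(1/K,1)$. Then $\bm{\sigma}=\one$, $\dot x_i=1-x_i>0$, and $\dot V=\sum_i d_i(1-x_i)>0$. The energy you need is the Hopfield/Cohen--Grossberg one, in which $\int_0^{x_i}s$ is replaced by $x_is(x_i)-\int_0^{x_i}s$ (that is, $\int_0^{\sigma_i}s^{-1}$):
\[
E(\bm{x})=\sum_i d_i\Big(x_is(x_i)-\int_0^{x_i}s\Big)-\tfrac12\bm{\sigma}^\top\bm{A}\bm{\sigma},\qquad \tfrac{d}{dt}E(\bm{x}(t))=-\sum_i d_i\,\dot\sigma_i\,\dot x_i\le0\ \text{ a.e.}
\]
This uses that $\bm{A}=\bm{D}\bm{P}$ is symmetric and that $s$ is nondecreasing. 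For this $E$, your expression $-\sum_i d_is'(x_i)\dot x_i^2$ is correct.

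\textbf{The LaSalle step is left open.} The step you flag as ``the difficulty'' is the crux, and it can be closed as follows.
On the $\omega$-limit set, $E$ is constant, so along any complete orbit in it $\dot\sigma_i\dot x_i=0$ a.e. Where $\dot x_i=0$, the Lipschitz bound on $s$ also gives $\dot\sigma_i=0$. Hence $\bm{\sigma}$ is constant on the orbit, and $\bm{x}(t)=\bm{P}\bm{\sigma}+e^{-t}(\bm{x}(0)-\bm{P}\bm{\sigma})$ for all $t\in\R$. Boundedness as $t\to-\infty$ forces $\bm{x}(0)=\bm{P}\bm{\sigma}$, so every limit point is an equilibrium. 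The ODE and the map have the same equilibria, so (i) places them in $\{-\one,\bm{0},\one\}$. Connectedness of $\omega$-limit sets then even gives convergence to a single one of them.

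\textbf{The monotone-systems fallback does not help.} Hirsch-type theorems give convergence only for generic initial conditions, whereas global attractivity concerns every trajectory. They also require strong monotonicity, which fails where $s_K$ saturates: there the Jacobian $-\bm{I}+\bm{P}\,\mathrm{diag}(s'(\bm{x}))$ reduces to $-\bm{I}$. Sandwiching $\bm{x}(0)$ between generic points only traps $\bm{x}(t)$ asymptotically between their limits, which may be $-\one$ and $\one$.
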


\begin{proof}
\ref{prop:threshold:1} First, let us note that equilibria satisfy $\bm{x}=\bm{P}\bm{s}(\bm{x})$. To simplify the writing we will replace $\bm{s}(\bm{x})$ by $\bm{s}$ when no confusion arises. Since $\Pi\bm{P}=\bm{P}\Pi$, equilibria satisfy $\Pi\bm{x}=\bm{P}\Pi\bm{s}$. Furthermore, the Rayleigh quotient of $\bm{P}$ on $\one^{\perp_{\bm{D}}}$ gives $\langle\Pi\bm{s},\Pi\bm{x}\rangle_{\bm{D}}\le\lambda_{N-1}\|\Pi\bm{s}\|_{\bm{D}}^2$. By Assumption~\ref{ass:s} one has 
\[(s(a)-s(b))(a-b)\ge\frac1K(s(a)-s(b))^2,\] 
so~\eqref{eq:polar} yields $\langle\Pi\bm{s},\Pi\bm{x} \rangle_{\bm{D}} \geq \|\Pi\bm{s}\|_{\bm{D}}^2/K$. Hence, 
\begin{equation*}
    (\frac{1}{K}-\lambda_{N-1})\|\Pi\bm{s}\|_{\bm{D}}^2\le0.
\end{equation*}
Therefore, if $K\lambda_{N-1}<1$ one gets that $\|\Pi\bm{s}\|_{\bm{D}}^2\le0$ yielding $\bm{s}=\alpha\one$ for some $\alpha\in\R$, $\bm{x}=\bm{P}\bm{s}=\alpha\one$ and $\alpha=s(\alpha)$. Recalling that Assumption \ref{ass:s} imposes $\mathrm{Fix}(s)=\{-1,0,1\}$, i.e. $\alpha\in\{-1,0,1\}$, completes the proof. 

\ref{prop:threshold:2} It is straightforward that 
\begin{equation*}
    \Pi\bm{x}(t+1)=(1-h)\Pi\bm{x}(t)+h\bm{P}\Pi\bm{s}.
\end{equation*}
The $\bm{D}$-norm of $\bm{P}$ on $\one^{\perp_{\bm{D}}}$ is $\rho(\Gcal)$, and~\eqref{eq:polar} with $|s(x_i)-s(x_j)|\le K|x_i-x_j|$ gives $\|\Pi\bm{s}\|_{\bm{D}}\le K\|\Pi\bm{x}\|_{\bm{D}}$. Replacing this in the equality above yields the inequality.

\ref{prop:threshold:3} This result is already stated in \cite{couthuresConsensusRobustClustering2025c}.
\end{proof}

Items~\ref{prop:threshold:1} and~\ref{prop:threshold:3} involve $\lambda_{N-1}$ only; item~\ref{prop:threshold:2} also requires $K|\lambda_1|<1$, which only matters for graphs with a strong bipartite component. With $s(x)=x$ ($K=1$) item~\ref{prop:threshold:1} holds on every connected graph: linear averaging cannot reveal communities. Graphs without bottleneck, such as complete graphs ($\lambda_{N-1}\le0$), cannot be split for any $K$. Item~\ref{prop:threshold:3} matters because item~\ref{prop:threshold:1} alone does not rule out a cut. A nonlinear system can have all its equilibria at consensus and still have trajectories that never reach them, for example oscillations.
Its limitation is that RNP runs the discrete-time dynamics, where item~\ref{prop:threshold:3} does not apply as is. The gap only matters for nearly bipartite graphs with a large step $h$; for small $h$ the discrete dynamics follows the continuous one.

\begin{proposition}[Unstable directions]\label{prop:local}
Under the assumptions of Proposition~\ref{prop:threshold}, the linearization of~\eqref{eq:dyn_i} at $\bm{0}$ is \[\bm{J}=(1-h)\bm{I}+hK\bm{P},\] with eigenvalues $\nu_k=1+h(K\lambda_k-1)$ and eigenvectors $\bm{v}_k$, $k=1,\dots,N$. The origin is unstable along $\one$ ($\nu_N>1$). Moreover, the invariant subspace $\one^{\perp_{\bm{D}}}$ contains an unstable direction if and only if $K\lambda_{N-1}>1$, the dominant one being the Fiedler vector $\bm{v}_{N-1}$; and $\nu_N>\nu_{N-1}$.
\end{proposition}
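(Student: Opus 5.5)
The plan is to linearize the update map $\bm{F}(\bm{x})=(1-h)\bm{x}+h\bm{P}\bm{s}(\bm{x})$ at the origin and read everything off the spectral decomposition of $\bm{P}$ from Section~\ref{sec:model}. Since $s$ is differentiable at $0$ with $s'(0)=K$ by Assumption~\ref{ass:s}, the componentwise map $\bm{s}$ has Jacobian $K\bm{I}$ at $\bm{0}$. Because $s(0)=0$, the origin is an equilibrium, and the chain rule gives $\bm{J}=(1-h)\bm{I}+hK\bm{P}$. Note that differentiability of $s$ is only assumed at $0$, which is all the linearization needs.

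Next, $\bm{J}$ is a polynomial in $\bm{P}$, so it shares the eigenvectors $\bm{v}_k$. For each $k$, $\bm{J}\bm{v}_k=(1-h+hK\lambda_k)\bm{v}_k=\nu_k\bm{v}_k$ with $\nu_k=1+h(K\lambda_k-1)$. These eigenvectors are $\bm{D}$-orthogonal and complete, and $\bm{J}$ is self-adjoint for $\langle\cdot,\cdot\rangle_{\bm{D}}$, so the spectrum is real and the list $\nu_1,\dots,\nu_N$ is exhaustive. Along $\bm{v}_N=\one$ we have $\lambda_N=1$, hence $\nu_N=1+h(K-1)>1$ because $h>0$ and $K>1$.

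For the invariant subspace $\one^{\perp_{\bm{D}}}$, it is spanned by $\bm{v}_1,\dots,\bm{v}_{N-1}$ and is invariant under $\bm{P}$, hence also under $\bm{J}$. The restriction of $\bm{J}$ to it has eigenvalues $\nu_1,\dots,\nu_{N-1}$. Since $h>0$, the map $\lambda\mapsto 1+h(K\lambda-1)$ is strictly increasing, so $\nu_k>1$ if and only if $K\lambda_k>1$. The largest is therefore $\nu_{N-1}$, and some $\nu_k>1$ with $k\le N-1$ if and only if $K\lambda_{N-1}>1$; this direction is the Fiedler vector $\bm{v}_{N-1}$.

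The one point needing care is the interpretation of ``unstable direction'' for a discrete-time map, which requires $|\nu_k|>1$. An eigenvalue $\nu_k<-1$ could also destabilize, and it would arise from very negative $\lambda_k$. I would check that this cannot happen. We have $\lambda_k\ge-1$ and $h\le1$, so $\nu_k\ge 1-h(K+1)$, which can be below $-1$ when $hK$ is large. Two readings are available. The first takes ``unstable'' to mean $\nu_k>1$, i.e., a real expanding non-oscillating direction, in which case the equivalence holds as shown. The second makes the monotonicity ordering explicit and, if needed, notes the additional condition $h(1+K|\lambda_1|)\le 2$ for the equivalence with $|\nu_k|>1$. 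I would state the $\nu_k>1$ reading and remark on the oscillatory case, in the spirit of the comment on nearly bipartite graphs after Proposition~\ref{prop:threshold}.

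Finally, the strict inequality $\nu_N>\nu_{N-1}$ follows from $\lambda_{N-1}<\lambda_N=1$, which holds by connectivity, together with strict monotonicity of $\lambda\mapsto\nu$.
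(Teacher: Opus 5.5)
Your proposal is correct and takes the same route as the paper, which simply invokes the spectral decomposition of $\bm{P}$: $\bm{J}$ is an affine function of $\bm{P}$, so it has the same eigenvectors $\bm{v}_k$, and strict monotonicity of $\lambda\mapsto 1+h(K\lambda-1)$ yields every claim. Your caveat is also well taken: the ``if and only if'' holds under the reading $\nu_k>1$ that the statement uses, whereas in the $|\nu_k|>1$ sense an oscillatory mode with $\nu_1<-1$ appears when $h(1+K|\lambda_1|)>2$, which is exactly the nearly bipartite, large-$h$ case the paper flags after Proposition~\ref{prop:threshold}.
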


The proof follows from the spectral decomposition of $\bm{P}$. For $K>K^*(\Gcal)$, disagreements are amplified along the Fiedler vector, whose sign pattern is the spectral cut~\cite{spielmanSpectralPartitioningWorks2007}. However, the consensus mode grows faster, so a round may end in $\pm\one$ without any cut: this is why RNP needs retries.

\subsection{Robustness and confinement of dense cores}

Let $\mathcal{W}\subseteq\Vcal$, $\sigma\in\{-1,1\}$, and define $L_\sigma\ge 0$ the Lipschitz constant of $s$ on an interval $\mathcal{I}_\sigma$ which defines a neighborhood of $\sigma$ (note that for a saturation function $s_K$ one has $L_\sigma=0$ on $\{x\mid\sigma x\ge1/K\}$). Let us also introduce\\
$\bullet$ $\eta=\min_{i\in\mathcal{W}}\big(1-L_\sigma(1-\xi_i^{\mathcal{W}})\big)$, $\gamma=\max_{i\in\mathcal{W}}\oldfrac{\xi_i^{\mathcal{W}}}{1-L_\sigma(1-\xi_i^{\mathcal{W}})}$\\
$\bullet$ distance of the group to the polarized state $\sigma$: $\delta(t)=\max_{i\in\mathcal{W}}|x_i(t)-\sigma|$\\
$\bullet$ disturbance from external neighbors: $u(t)=\max_{i\in\mathcal{W},j\in\Ncal_i\setminus\mathcal{W}}|s(x_j(t))-\sigma|$

The following input-to-state stability (ISS) property holds (see ~\cite{couthuresConsensusRobustClustering2025c}): if $\eta>0$ and $x_i(\tau)\in\mathcal{I}_\sigma$ for all $i\in\mathcal{W}$, $\tau\le t$, then
\begin{equation}\label{eq:iss}
\delta(t)\le(1-h\eta)^t\delta(0)+\gamma\max_{0\le\tau<t}u(\tau).
\end{equation}
The gain $\gamma$ is smaller for more internally connected agents, but~\eqref{eq:iss} does not prevent a sign change. The next result does.

\begin{proposition}[Confinement]\label{prop:trap}
Let $s=s_K$, $h\in(0,1]$, and $\mathcal{W}\subseteq\Vcal$ satisfy
\begin{equation}\label{eq:trap}
\xi_i^{\mathcal{W}}\le\frac{K-1}{2K}\qquad\forall i\in\mathcal{W}.
\end{equation}
Then $\Hcal^{+}(\mathcal{W})=\{\bm{x}\in\Xcal\mid  x_i\ge 1/K,\ \forall i\in\mathcal{W}\}$ and $\Hcal^{-}(\mathcal{W})=\{\bm{x}\in\Xcal\mid  x_i\le -1/K,\ \forall i\in\mathcal{W}\}$ are forward invariant. Moreover, without information on the external neighbors \eqref{eq:trap} cannot be weakened.
\end{proposition}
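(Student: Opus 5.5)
By the symmetry $s_K(-x)=-s_K(x)$, the map $\bm{x}\mapsto-\bm{x}$ conjugates~\eqref{eq:dyn_i} to itself and exchanges $\Hcal^{+}(\mathcal{W})$ and $\Hcal^{-}(\mathcal{W})$. It therefore suffices to treat $\Hcal^{+}(\mathcal{W})$. Since $\Xcal$ is already forward invariant, I only need to show one thing: if $\bm{x}(t)\in\Hcal^{+}(\mathcal{W})$, then $x_i(t+1)\ge 1/K$ for every $i\in\mathcal{W}$. The argument is a one-step worst-case bound on each agent of $\mathcal{W}$, in the spirit of the ISS estimate~\eqref{eq:iss} with $L_{+1}=0$.

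Fix $i\in\mathcal{W}$ and split the average action it receives into internal and external neighbors. Every internal neighbor $j\in\Ncal_i\cap\mathcal{W}$ satisfies $x_j\ge 1/K$, hence $s_K(x_j)=1$. Every external neighbor satisfies $s_K(x_j)\ge-1$. This gives
\[
\frac{1}{d_i}\sum_{j\in\Ncal_i}s_K(x_j)\ \ge\ (1-\xi_i^{\mathcal{W}})-\xi_i^{\mathcal{W}}=1-2\xi_i^{\mathcal{W}}\ \ge\ \frac{1}{K},
\]
where the last inequality is exactly~\eqref{eq:trap}. The update writes $x_i(t+1)$ as a convex combination (weights $1-h$ and $h$) of $x_i(t)\ge 1/K$ and this average, which is also at least $1/K$. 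Hence $x_i(t+1)\ge 1/K$, and induction on $t$ gives forward invariance. This step is routine.

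The real content is the sharpness claim. I read ``without information on the external neighbors'' as follows: condition~\eqref{eq:trap} must guarantee invariance for every state of the agents outside $\mathcal{W}$. To show it cannot be weakened, I would construct a violating configuration whenever some $i\in\mathcal{W}$ has $\xi_i^{\mathcal{W}}>(K-1)/(2K)$. Put every agent of $\mathcal{W}$ exactly at $1/K$ and every agent outside $\mathcal{W}$ at $-1$. This state lies in $\Hcal^{+}(\mathcal{W})$. The received average at $i$ is then exactly $1-2\xi_i^{\mathcal{W}}<1/K$, so
\[
x_i(1)=(1-h)\tfrac1K+h\bigl(1-2\xi_i^{\mathcal{W}}\bigr)<\tfrac1K
\]
for every $h\in(0,1]$, and the trajectory leaves $\Hcal^{+}(\mathcal{W})$ after one step.

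The main obstacle is stating the optimality claim precisely. The natural reading is ``for every graph containing $\mathcal{W}$ with these external degree ratios and every initial state in $\Hcal^{+}(\mathcal{W})$''. The construction above works for any fixed graph once the external agents are free to be placed anywhere in $\Xcal$. I would also remark that the bound holds node by node: it cannot be relaxed even for a single agent while all the others satisfy it, because the counterexample only uses the row of that one agent.
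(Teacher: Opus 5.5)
Your proof is correct and follows the paper's argument: the same one-step bound $\bar s_i\ge 1-2\xi_i^{\mathcal{W}}\ge 1/K$ combined with the convex-combination update, and the same sharpness witness in which all external neighbors broadcast $-1$. Placing the agents of $\mathcal{W}$ exactly at $1/K$ is a small improvement over the paper, which exhibits the exit only for $h=1$; your witness gives it for every $h\in(0,1]$.
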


\begin{proof}
Let us consider $\bm{x}(t)\in\Hcal^+(\mathcal{W})$, $i\in\mathcal{W}$, and let $\bar{s}_i= \sum_{j\in\Ncal_i}s_K(x_j(t)) / d_i$ be the average action of its neighbors. Internal neighbors broadcast $1$ and external ones at least $-1$, so $\bar s_i\ge1-2\xi_i^{\mathcal{W}}\ge1/K$, and 
\[x_i(t+1)=(1-h)x_i(t)+h\bar s_i\ge\min(x_i(t),\bar s_i)\ge1/K.\]
If~\eqref{eq:trap} fails for $i$ and all its external neighbors broadcast $-1$, then $\bar s_i<1/K$ and, for $h=1$, $x_i$ leaves $\Hcal^+(\mathcal{W})$. A similar reasoning can be done for $\Hcal^-(\mathcal{W})$.
\end{proof}

This closed form specializes the general force balance of~\cite{couthuresConsensusRobustClustering2025c}. Since $(K-1)/2K<1/2$, no saturated dynamics can confine a group whose members have more external than internal neighbors.

\begin{figure*}[t]
\centering
\includegraphics[width=\textwidth]{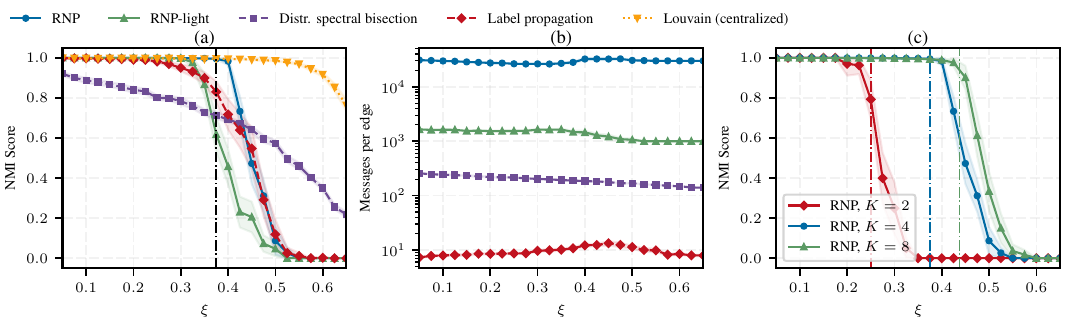}
\caption{ABCD graphs ($N=1000$, 20 graphs per $\xi$, 5 runs of each randomized method per graph; lines are means over the graphs, shaded areas 95\% confidence intervals). (a) NMI of RNP ($K=4$, $T_{\max}=300$, $N_{\mathrm{retry}}=50$), RNP-light ($T_{\max}=50$, $N_{\mathrm{retry}}=10$) and baselines; the vertical line is $(K-1)/2K=0.375$.  (b) Scalar messages per edge. (c) NMI of RNP for $K\in\{2,4,8\}$; vertical lines are the confinement thresholds $(K-1)/2K\in\{0.25,0.375,0.44\}$.}
\label{fig:bench}\vspace{-2mm}
\end{figure*}

\subsection{Exact separation in one round}

Let us notice that Proposition \ref{prop:trap} provides a condition for the preservation of a given community. The next result treats all groups simultaneously, provides a sufficient value for $T_{\max}$ and also gives a sufficient condition to reach polarization not just to stay there.
\begin{theorem}[One-round separation]\label{thm:recovery}
Let $s=s_K$, $h\in(0,1]$, and $\{\Vcal_1,\dots,\Vcal_m\}$ be a partition of $\Vcal$ into $m$ groups such that $\xi_i^{\Vcal_c}\le\varepsilon<\frac{(K-1)}{2K}$ for all $c\in\{1,\dots,m\}$ and $i\in\Vcal_c$. Let $m^*=\frac{\varepsilon}{((1-\varepsilon)K-1)}\in(0,\frac1K)$ and $\sigma=(\sigma_1,\dots,\sigma_m)\in\{-1,1\}^m$. 

If $\sigma_c x_i(0) > m^*$ for all $i \in \Vcal_c$ and all $c \in \{1,\dots,m\}$, then $\sigma_c x_i(t) \ge 1/K$ for all $i \in \Vcal_c$ and all $t\ge t^*$, where
\begin{equation*}
    t^*=\Big\lceil\frac{1/K-m_0}{h\big((1-\varepsilon)K-1\big)(m_0-m^*)}\Big\rceil,
\end{equation*}
with $m_0 = \min_{c, \, i \in \Vcal_c} \sigma_c x_i(0)$.
Consequently, if $T_{\max}\ge t^*$, the round cuts exactly the edges between groups with opposite signs $\sigma_c$, and no other edge.
\end{theorem}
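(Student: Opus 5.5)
The plan is to track the single scalar $m(t)=\min_{c,\,i\in\Vcal_c}\sigma_c x_i(t)$, the worst-case signed opinion over all groups, and show that it grows at least linearly until it reaches $1/K$. From then on Proposition~\ref{prop:trap} keeps it there. The key step is a one-step lower bound on each agent's signed neighbor average. Fix $c$, $i\in\Vcal_c$, and suppose $m(t)\ge m^*>0$.

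\textbf{One-step bound.} Every internal neighbor $j$ satisfies $\sigma_c x_j(t)\ge m(t)$. Since $s_K$ is monotone and odd, $\sigma_c s_K(x_j)\ge s_K(m(t))=\min(Km(t),1)$. Every external neighbor contributes at least $-1$. Hence
\[
\sigma_c\bar s_i\ge(1-\varepsilon)\min(Km(t),1)-\varepsilon .
\]
This uses $\xi_i^{\Vcal_c}\le\varepsilon$ and the fact that the bound is decreasing in $\xi$, which holds because $\min(Km,1)+1\ge0$.

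\textbf{Linear growth while $m(t)<1/K$.} In this regime,
\[
\sigma_c\bar s_i-m(t)\ge\big((1-\varepsilon)K-1\big)m(t)-\varepsilon=\big((1-\varepsilon)K-1\big)\big(m(t)-m^*\big),
\]
which is precisely the definition of $m^*$. The hypothesis $\varepsilon<(K-1)/(2K)$ gives $(1-\varepsilon)K-1>0$, and it also gives $m^*<1/K$. Then
\[
\sigma_c x_i(t+1)=(1-h)\sigma_c x_i(t)+h\sigma_c\bar s_i\ge m(t)+h\big((1-\varepsilon)K-1\big)\big(m(t)-m^*\big),
\]
using $\sigma_c x_i(t)\ge m(t)$. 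Taking the minimum over $i$ shows $m(t)$ is nondecreasing, so $m(t)-m^*\ge m_0-m^*>0$ for all $t$. Each step therefore adds at least $h((1-\varepsilon)K-1)(m_0-m^*)$ while $m<1/K$. After $t^*$ steps we have $m(t^*)\ge1/K$.

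\textbf{Overshoot and invariance.} If $m(t)$ crosses $1/K$ during the step, it remains $\ge1/K$. In that case $\sigma_c\bar s_i\ge1-2\varepsilon>1/K$ and $\sigma_c x_i(t)\ge m(t)$, so the convex combination is at least $\min(m(t),1/K)$, which gives the same conclusion. Once $m(t)\ge1/K$, the state lies in $\Hcal^{\sigma_c}(\Vcal_c)$ for every $c$ simultaneously. Condition~\eqref{eq:trap} holds for each group because $\varepsilon<(K-1)/(2K)$, so Proposition~\ref{prop:trap} applies and $\sigma_c x_i(t)\ge1/K$ for all $t\ge t^*$.

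\textbf{Consequence for the cut.} For $t\ge t^*$ we get $s_K(x_i)=\sigma_c$ for each $i\in\Vcal_c$. Thus $s(x_i)s(x_j)<0$ exactly when $i$ and $j$ lie in groups with opposite $\sigma$. If $T_{\max}\ge t^*$, the round cuts exactly those edges.

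\textbf{Main obstacle.} The delicate point is the bookkeeping in the linear-growth step. The increment must be taken uniformly over all agents using the running minimum $m(t)$, and the case where $m(t)\ge1/K$ for some agents but not yet for the minimum must be handled. I would treat this by working only with lower bounds $\sigma_c x_j\ge m(t)$ together with the monotonicity of $s_K$. In the saturated case, the bound $\min(Km,1)\ge Km$ fails, so there I switch to the bound $1-2\varepsilon$ instead.
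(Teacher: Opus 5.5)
Your proof is correct and takes essentially the same route as the paper. You track the running minimum $m(t)$, bound each signed neighbor average below by $(1-\varepsilon)s_K(m)-\varepsilon$, obtain the increment $h((1-\varepsilon)K-1)(m-m^*)$ on $[m^*,1/K]$, and use $1-2\varepsilon>1/K$ for invariance above $1/K$. The differences are cosmetic: the paper packages the one-step bound as a non-decreasing comparison map $g$ with $m(t+1)\ge g(m(t))$, whereas you argue directly on $m(t)$, and your extra appeal to Proposition~\ref{prop:trap} is redundant with your convex-combination argument but harmless.
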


\begin{proof}
Let $m(t)=\min_{c,\,i\in\Vcal_c}\sigma_cx_i(t)$. Since $s_K$ is odd and non-decreasing, for $i\in\Vcal_c$ the internal neighbors satisfy $\sigma_cs_K(x_j)\ge s_K(m)$ and the external ones $\sigma_cs_K(x_j)\ge-1$. So, with $\xi_i=\xi_i^{\Vcal_c}$ and $m=m(t)$ one gets 
\begin{equation*}
    \sigma_c x_i(t+1) \geq (1-h) m + h[(1 - \xi_i) s_K(m) - \xi_i].
\end{equation*}
The bracket is non-increasing in $\xi_i$ because $s_K(m)\ge-1$, hence $m(t+1)\ge g(m(t))$ with \[g(m)=(1-h)m+h[(1-\varepsilon)s_K(m)-\varepsilon],\] and $g$ is non-decreasing. On $[m^*,1/K]$ one has 
\begin{equation*}
    g(m) - m = h ((1 - \varepsilon) K - 1)(m - m^*) \geq 0,
\end{equation*}
so the iterates of $g$ from $m_0$ increase by at least $h((1-\varepsilon)K-1)(m_0-m^*)$ per step until they exceed $1/K$, which happens before $t^*$. For $m\ge1/K$, $g(m)\ge\min(m,1-2\varepsilon)\ge1/K$, since $1-2\varepsilon>1/K$. Thus $m(t)\ge1/K$ for $t\ge t^*$: every agent of $\Vcal_c$ broadcasts $\sigma_c$, and an edge is cut if and only if its endpoints lie in groups with opposite signs.
\end{proof}

Theorem~\ref{thm:recovery} is a deterministic recovery guarantee for one round. 
It thus certifies \emph{which} cut a round produces from a given set of initial conditions, not the typical behavior. Section~\ref{sec:num} shows that the actual basin is much larger. This is mainly due to the fact that the proposed uniform initialization covers a small set for large $N$.

\subsection{Consequences for RNP}

\begin{corollary}[Certificates for RNP]\label{cor:rnp}
    Let Assumption~\ref{ass:s} hold and $h\in(0,1]$.
    \begin{enumerate}[wide, nosep]
        \item \emph{Cost:} RNP stops after at most $(|\Ecal|+1)N_{\mathrm{retry}}$ rounds, and sends at most $2T_{\max}|\Ecal|$ scalar messages per round. At least $2T_{\max}N_{\mathrm{retry}}$ messages per edge are spent after the last cut.\label{cor:rnp:1}
        \item \emph{Core preservation:} Let $s=s_K$ and $\mathcal{W}$ satisfy~\eqref{eq:trap}. If the state enters $\Hcal^{+}(\mathcal{W})$ or $\Hcal^{-}(\mathcal{W})$ during a round, no edge inside $\mathcal{W}$ is cut in that round, and~\eqref{eq:trap} keeps holding in all later graphs where $\mathcal{W}$ is intact.\label{cor:rnp:2}
        \item \emph{Indivisibility:} If a component $\Gcal_c$ of the current graph satisfies $K\rho(\Gcal_c)<1$, every round on $\Gcal_c$ converges to the consensus set at the rate of Proposition~\ref{prop:threshold}; if the limit is $\pm\one$, no edge of $\Gcal_c$ is cut.\label{cor:rnp:3}
    \end{enumerate}
\end{corollary}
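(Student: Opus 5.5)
The three items are essentially independent, and each reduces to an earlier result plus a counting argument; I would prove them in order.

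\emph{Cost.} The key observation is that the retry counter $n$ is reset to $0$ exactly when at least one edge is removed. Since $|\Ecal|$ is finite and edges are never added, at most $|\Ecal|$ rounds can be successful. I would split the sequence of rounds into maximal blocks of consecutive unsuccessful rounds, separated by successful ones. There are at most $|\Ecal|+1$ such blocks (before the first cut, between cuts, after the last cut), and each contains at most $N_{\mathrm{retry}}$ rounds. This gives at most $|\Ecal|+(|\Ecal|+1)N_{\mathrm{retry}}$ rounds in total. To reach the stated bound $(|\Ecal|+1)N_{\mathrm{retry}}$, I would fold each successful round into the preceding block: a block of failed rounds ending in a cut has at most $N_{\mathrm{retry}}-1$ failures plus one success. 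Only the final block has exactly $N_{\mathrm{retry}}$ rounds. This counting is the one step I expect to need care.

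For messages, each of the $T_{\max}$ iterations of~\eqref{eq:dyn_i} requires every agent to send one scalar $s(x_i)$ along each incident edge. That is two messages per edge per step, so at most $2T_{\max}|\Ecal|$ per round, since the current edge set is contained in the original one. After the last cut the algorithm must still perform $N_{\mathrm{retry}}$ unsuccessful rounds before stopping. Each of these costs $2T_{\max}$ messages on every surviving edge, which gives the lower bound per edge.

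\emph{Core preservation.} I would invoke Proposition~\ref{prop:trap}. Once $\bm{x}(\tau)\in\Hcal^{\pm}(\mathcal{W})$, forward invariance keeps the state there up to $T_{\max}$. At the end of the round, every $i\in\mathcal{W}$ then has $\pm x_i\ge 1/K$, hence $s_K(x_i)=\pm1$, so $s_K(x_i)s_K(x_j)=1>0$ for internal pairs and no internal edge is cut. For later graphs, note that removing edges with $\mathcal{W}$ intact only deletes external edges of nodes in $\mathcal{W}$. Then $d_i^E$ decreases by some $r\ge0$ while the internal degree is unchanged, and $(d_i^E-r)/(d_i-r)\le d_i^E/d_i$, so~\eqref{eq:trap} persists.

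\emph{Indivisibility.} Since distinct components are decoupled, the dynamics on $\Gcal_c$ is an instance of~\eqref{eq:dyn_i} on a connected graph. Applying Proposition~\ref{prop:threshold}\ref{prop:threshold:2} to $\Gcal_c$ gives geometric decay of $\|\Pi\bm{x}(t)\|_{\bm{D}}$ at the stated rate. If the limit is $\pm\one$, then all $x_i$ eventually share a sign with $|x_i|\ge 1/K$ near the limit, so all actions are equal and no edge is cut. Strictly, this requires $T_{\max}$ large enough, which I would state as the reading of ``limit''.
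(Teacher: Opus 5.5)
Your proposal is correct and follows the same route as the paper's proof: the counter-reset counting for the cost, Proposition~\ref{prop:trap} plus the fact that pruning external edges can only lower $\xi_i^{\mathcal{W}}$ for core preservation, and Proposition~\ref{prop:threshold}\ref{prop:threshold:2} on the decoupled component for indivisibility; you spell out the block counting and the ratio inequality $(d_i^E-r)/(d_i-r)\le d_i^E/d_i$ more carefully than the paper's one-line proof. The only small imprecision is in item~\ref{cor:rnp:3}, where $s$ is a general function satisfying Assumption~\ref{ass:s} rather than $s_K$, so near $\pm\one$ the actions need not be equal, but they all have the same strict sign, which is enough to guarantee that no edge is cut.
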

\begin{proof}
\ref{cor:rnp:1} Every round removes an edge or increments a counter that is reset only after a removal. \ref{cor:rnp:2} Proposition~\ref{prop:trap}, and pruning only decreases $d_i^E$. \ref{cor:rnp:3} Proposition~\ref{prop:threshold}-\ref{prop:threshold:2}, applied to $\Gcal_c$; components evolve independently.
\end{proof}

Items~\ref{cor:rnp:2}--\ref{cor:rnp:3} and Theorem~\ref{thm:recovery} suggest choosing $K$ in the window $K^*(\Gcal)<K<\min_cK^*(\Gcal_c)$, where $\Gcal_c$ is the subgraph induced by the $c$-th community to be found (with $\rho(\Gcal_c)$ in place of $\lambda_{N-1}$ for $\Gcal_c$ in discrete time): the graph can then be split, while isolated communities are certified indivisible. This window competes with the confinement threshold $(K-1)/2K$, which increases with $K$.

\section{Numerical Performance}\label{sec:num}

We use two benchmarks. The main one is the Artificial Benchmark for Community Detection (ABCD)~\cite{kaminskiArtificialBenchmarkCommunity2021}, whose communities, unlike those of the Lancichinetti--Fortunato--Radicchi (LFR) benchmark~\cite{lancichinettiBenchmarkGraphsTesting2008}, remain well connected as mixing increases. 
Its mixing parameter $\xi$ is the expected external degree ratio, i.e., the mean over nodes $i$ of $\xi_i^{\Vcal_c}$, where $\Vcal_c$ is the community of $i$. We generate graphs with $N=1000$, degree exponent $2.5$, degrees in $[10,100]$, community size exponent $1.5$ and sizes in $[20,250]$, using the Python implementation \texttt{abcd-graph}. Unless stated otherwise, RNP uses $s_K$ with $K=4$, $h=0.5$, $T_{\max}=300$, $N_{\mathrm{retry}}=50$. The second benchmark is LFR, generated with the \texttt{networkx} implementation and the same degree and community size parameters, with $8$ graphs per nominal mixing $\mu\in[0.05,0.65]$.

Baselines are Louvain~\cite{blondelFastUnfoldingCommunities2008} (centralized reference), asynchronous label propagation~\cite{raghavanLinearTimeAlgorithm2007} and a distributed spectral bisection (DSB) in the spirit of~\cite{kempeDecentralizedAlgorithmSpectral2008,linPowerIterationClustering2010}. In DSB, each component runs $30$ steps of the lazy walk $\bm{y}\gets\frac{(\bm{y}+\bm{P}\bm{y})}{2}$ from a random vector; the local differences $y_i(t)-y_i(t+1)$ cancel the consensus mode, and their signs approximate the Fiedler cut, which is applied as in RNP. A split is kept only if it increases the modularity; this global test is granted to DSB for free, which favors it. We report the normalized mutual information (NMI) between the detected and true partitions ($1$ for a perfect match, $0$ for independent partitions) and the number of scalar messages per edge.


\begin{figure}[t]
\centering
\includegraphics[width=\columnwidth]{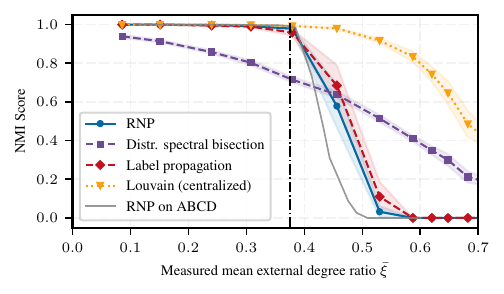}
\caption{Second benchmark: LFR graphs ($N=1000$, 20 graphs per $\mu$, 5 runs of each randomized method per graph), plotted against the measured mean external degree ratio $\bar\xi$. The gray line is RNP on ABCD (Fig.~\ref{fig:bench}a), also plotted against its measured $\bar\xi$; the vertical line is $(K-1)/(2K)=0.375$.}
\label{fig:lfr}\vspace{-4mm}
\end{figure}

\subsection{Results}

\emph{Accuracy.} RNP achieves NMI $\ge0.99$ and finds the correct communities (relative error below $0.01$) for $\xi\le0.4$, similar to Louvain (Fig.~\ref{fig:bench}a). In this regime it outperforms label propagation and DSB (NMI $0.72$ and $0.69$ at $\xi=0.4$, respectively), and $\leq 0.92$ even for well-separated graphs). Beyond $\xi\approx0.40$, the NMI of RNP drops (about $0.47$ at $\xi=0.45$, $0.09$ at $0.5$, $0$ from $0.55$) whereas Louvain stays above $0.96$ up to $\xi=0.55$; DSB degrades more slowly and is the most accurate decentralized method for $\xi\geq 0.45$. When it fails, RNP returns a single community (relative error close to $-1$): it \emph{fails safely} rather than producing a noisy partition.

\emph{Communication cost.} RNP needs about $3\times10^4$ messages per edge (Fig.~\ref{fig:bench}b), two orders of magnitude more than DSB ($140$ to $260$) and three more than label propagation (about $10$). This number of message is due to the certification of indivisibility: by Corollary~\ref{cor:rnp}, the final $N_{\mathrm{retry}}$ unsuccessful rounds alone cost $2T_{\max}N_{\mathrm{retry}}=3\times10^4$ messages per edge. RNP-light reduces it twentyfold with same accuracy for $\xi \leq 0.3$, but start degrading earlier ($0.46$ at $\xi=0.4$). This comes from the parameter$N_{\mathrm{retry}}$ that trades messages for accuracy, and not from $T_{\max}$. Since rounds start from independent initial conditions, a component that a round cuts with probability $p$ is wrongly declared indivisible with probability $(1-p)^{N_{\mathrm{retry}}}$, and $p$ is small near the breakdown. With $N_{\mathrm{retry}}=10$ and $T_{\max}=300$, RNP returns the same partitions for $\xi \leq 0.275$ with three times fewer messages, but its NMI drops quickly as $\xi$ increases.

\emph{Breakdown follows the confinement threshold.} Fig.~\ref{fig:bench}c varies $K$. The breakdown point moves with $K$ as predicted by Proposition~\ref{prop:trap}: RNP keeps an NMI of at least $0.97$ up to $\xi=0.2$ for $K=2$ (threshold $0.25$), up to $0.4$ for $K=4$ (threshold $0.375$), and up to $0.425$ for $K=8$ (threshold $0.44$). On these graphs, $K^*(\Gcal)$ grows from $1.1$ to $1.8$ and $\min_cK^*(\Gcal_c)$, over the subgraphs $\Gcal_c$ induced by the true communities, decreases from $2.4$ to $1.76$ as $\xi$ goes from $0.1$ to $0.5$. Hence $K=4$ and $K=8$ lie outside the indivisibility window, yet no over-segmentation is observed (relative error at most $0.06$). The window is therefore conservative in practice, and confinement is the binding constraint: the best choice is a large gain.

\emph{Second benchmark.} The LFR generator benchmark is mixing more the communities with its nominal parameter $\mu$ (e.g., a measured mean external ratio $\bar\xi=0.38$ for $\mu=0.25$ and $0.46$ for $\mu=0.3$). Simplify exposition, Fig.~\ref{fig:lfr} uses $\bar{\xi}$ as x-axis. RNP is again exact up to $\bar{\xi}\approx0.38$ (NMI $\ge 0.97$), drops to $0.58$ at $\bar\xi=0.46$ and to $0.03$ beyond $0.53$, i.e., it breaks down at the same external ratio as on ABCD, close to the confinement threshold. Below the breakdown, RNP is slightly more accurate than label propagation and clearly more accurate than DSB; at the breakdown, label propagation degrades slightly later, and DSB is the most accurate decentralized method beyond it. Louvain remains above $0.9$ up to $\bar\xi\approx0.53$. On these graphs, $K^*(\Gcal)$ exceeds $\min_cK^*(\Gcal_c)$ from $\mu\approx0.33$ on, and $6\%$ of the communities are disconnected at $\mu=0.4$, which confirms that LFR communities lose their internal cohesion as mixing increases.

\section{Limitations and Discussion}\label{sec:limits}

We can observe the following limitation for RNP:

\emph{Irreversible cuts.} A cut is decided from a single trajectory and never undone. Proposition~\ref{prop:trap} protects a core only once it has polarized, and Theorem~\ref{thm:recovery} covers an exponentially small set of initial conditions; frontier nodes are then assigned to one side or isolated depending on the initialization.

\emph{Cost of stopping.} Because the consensus mode dominates (Proposition~\ref{prop:local}), indivisibility is only certified statistically, after $N_{\mathrm{retry}}$ unsuccessful rounds. The retry counter also requires a component-wide agreement on whether a cut occurred; running a fixed number of rounds on a common clock is a fully decentralized alternative.

\emph{Gain selection.} The relevant thresholds depend on $K^*(\Gcal)$, $K^*(\Gcal_c)$ and the external ratios, which agents cannot compute. Our results support a large initial gain, but no local rule adapting $K$ is available yet.

\emph{Disclosure and scope.} Saturation limits but does not prevent information leakage, and no formal privacy guarantee is claimed. Proposition~\ref{prop:threshold} requires undirected graphs, and item~\ref{prop:threshold:2} the stronger condition $K\rho(\Gcal)<1$.

These limitations motivate two variants studied in the extended version: cutting edges during the transient with a decaying confidence bound, in the spirit of~\cite{morarescuOpinionDynamicsDecaying2011}, and replacing hard cuts by edge reliability scores over many rounds.

\section{Conclusion}\label{sec:conclusion}

We analyzed Recursive Neighbor Pruning, a decentralized community detection algorithm in which agents only exchange saturated actions and cut ties with neighbors displaying opposite actions. For the discrete-time dynamics run by the agents, we proved a spectral threshold for splitting, identified the Fiedler direction of symmetry breaking, gave a closed-form confinement condition $\xi_i^{\mathcal{W}}\le(K-1)/2K$ on the external degree ratios and an exact one-round separation result, and bounded the cost of the recursion. On ABCD and LFR benchmarks, RNP matches Louvain up to a mean external ratio of about $0.4$, its breakdown follows the confinement threshold on both benchmarks and for three gains, and it fails safely beyond. Its main weaknesses are its communication cost, dominated by stopping, and the lack of guarantees on the typical basin of the separating rounds, which future work will address together with local gain adaptation.

\bibliographystyle{IEEEtran}
\bibliography{Community_detection}

\end{document}